\newtheorem{theorem}{Theorem}
\newcommand{\xiangyi}[1]{\textcolor{blue}{#1}}
\begin{document}
	
	\title{Multipartite Entanglement Routing as a Hypergraph Immersion Problem}

	\author{Yu Tian}%
	\affiliation{Nordita, Stockholm University and KTH Royal Institute of Technology, SE-10691 Stockholm, Sweden}
	\affiliation{Center for Systems Biology Dresden, 01307 Dresden, Germany}
	
	\author{Yuefei Liu}%
	\affiliation{Department of Applied Physics, School of Engineering Sciences, KTH 
		Royal Institute of Technology, AlbaNova University Center, SE-10691 Stockholm, Sweden}
	
	\author{Xiangyi~Meng}%
	\email{xmenggroup@gmail.com}
	\affiliation{Department of Physics, Applied Physics, and Astronomy, Rensselaer Polytechnic Institute, Troy, New York 12180, USA}%
	
	\date{\today}
	
	\begin{abstract}
		Multipartite entanglement, linking multiple nodes simultaneously, is a higher-order correlation that offers advantages over pairwise connections in quantum networks (QNs). Creating reliable, large-scale multipartite entanglement requires entanglement routing, a process that combines local, short-distance connections into a long-distance connection, which can be considered as a transformation of network topology.
		Here, we address the question of whether a QN can be topologically transformed into another via entanglement routing. Our key result is an exact mapping from multipartite entanglement routing to Nash-Williams's graph immersion problem, extended to hypergraphs. This generalized hypergraph immersion problem introduces a partial order between QN topologies, permitting certain topological transformations while precluding others, offering discerning insights into the design and manipulation of higher-order network topologies in QNs.
	\end{abstract}
	
	\maketitle
	
	Quantum entanglement, a pivotal quantum resource~\cite{q-resour_cg19}, 
	allows for the simultaneous correlation of \emph{multiple} qubits. Multipartite entanglement offers notable benefits, in terms of both efficiency and scalability, over bipartite entanglement in a variety of quantum information applications such as sharing secrets among multiple parties~\cite{multipartite-secret-share_fmfp07}
	and reducing the memory requirements for entangling qubits~\cite{miguel2023optimized}. These benefits make multipartite entanglement a necessary building block for future quantum networks (QN)~\cite{multipartite-q-netw_hpe19,multipartite-q-netw_wxkhlhgsp20,*multipartite-q-netw_ctpv21,multipartite-commun-q-netw}.
	In a QN, each \emph{vertex (node)} typically represents a local assembly of qubits entangled with qubits from other vertices, and each \emph{edge (link)} signifies an entangled state between qubits that belong to different vertices~\cite{QEP_acl07,*conpt_mgh21}.
	In this setting, multipartite entanglement is most effectively represented as a higher-order interaction, or in the language of hypergraph theory, as a \emph{hyperedge}~\cite{hypergr-theor-introd}. 
	Contrasting with ordinary edges that connect just two vertices, hyperedges can connect several vertices at once---a feature that has wider applications in not only physics but also chemistry~\cite{jost2019hyperL}, biology~\cite{klamt2009biology}, and social sciences~\cite{taramasco2010collab,*krumov2011collab, *bianconi2021hyper, *Kwang2023core}.

	In QNs, vertices not directly connected by an edge can still be entangled through a fundamental operation known as {entanglement routing}~\cite{q-netw-route_p19,*q-netw-route_pkttjbeg19,*q-netw-route_lphnm20,multipartite-q-netw-route_sb23}, provided that there is a path of edges connecting the vertices.
	This operation, however, utilizes all the edges along the path, subsequently deleting them from the QN, leading to a dynamic process called path percolation~\cite{path-percolation_mhrk24}. 
	As a result, each entanglement routing operation uniquely modifies the QN's topology in an irreversible manner, permitting certain types of topological transformations while precluding others, raising the question: \emph{can a QN's topology be transformed from one to another through a series of entanglement routing operations?} 
	It is also unknown how this question extends to hypergraph QNs 
	due to the ambiguous definition of paths involving multipartite entanglement~\cite{multipartite-q-netw-route_sb23}. 
	Understanding this question is key to grasping the complexity of many-body entanglement structures (e.g.,~matrix product states~\cite{ho2019periodic}, which can be reformulated as QN representations~\cite{multipartite-commun-q-netw, conpt_mhtdlgh23}). Moreover, the insights can be applied to evaluate the effectiveness of qubit layouts (e.g.,~the hex lattices used by IBM~\cite{chow2021ibm}). This is especially relevant as small-scale multipartite-entanglement QNs have been implemented on superconducting~\cite{huber2024parametric, *chen2014qubit} and optical~\cite{su2016quantum, *kok2007linear} platforms.

	Here, we address this question of entanglement routing, involving \emph{both} edges and hyperedges, by establishing an exact mapping to a topological problem on hypergraphs. We first identify and simplify entanglement routing into two hypergraph rules by considering the routing operation for multipartite entanglement. 
	Then, we show that the corresponding topological problem can be viewed as a generalization of the well-established \emph{graph immersion} problem~\cite{RobertsonSeymour_minor_2010}, which, to our best knowledge, has not been extended to hypergraphs. This aptly allows us to formalize our problem as \emph{hypergraph immersion}, which establishes a {partial order}~\cite{algebra-graph-theor} between hypergraph topologies.
	Following graph immersion~\cite{embed-fix-param-tract_gkmw11}, we also prove in an accompanying work that the algorithmic complexity of solving hypergraph immersion is polynomial~\cite{meng2024tractable}. 
	This unique bridge between quantum information and graph theory may provide some useful insights into the efficient design of multipartite QNs.
	
	\begin{figure}[t!]
		\centering
		\begin{minipage}[b]{120pt}
			\begin{minipage}[b]{120pt}
				\centering
				{\includegraphics[width=120pt]{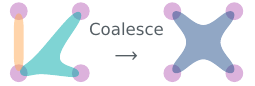}\subcaption{\label{fig_coalesce}}}
			\end{minipage} \\[20pt] 
			
			\begin{minipage}[b]{120pt}
				\centering
				{\includegraphics[width=120pt]{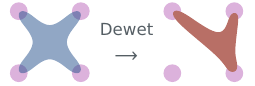}\subcaption{\label{fig_dewet}}}
			\end{minipage}
		\end{minipage}\hspace{2mm}
		\begin{minipage}[b]{114pt}
			\centering
			{\includegraphics[width=114pt]{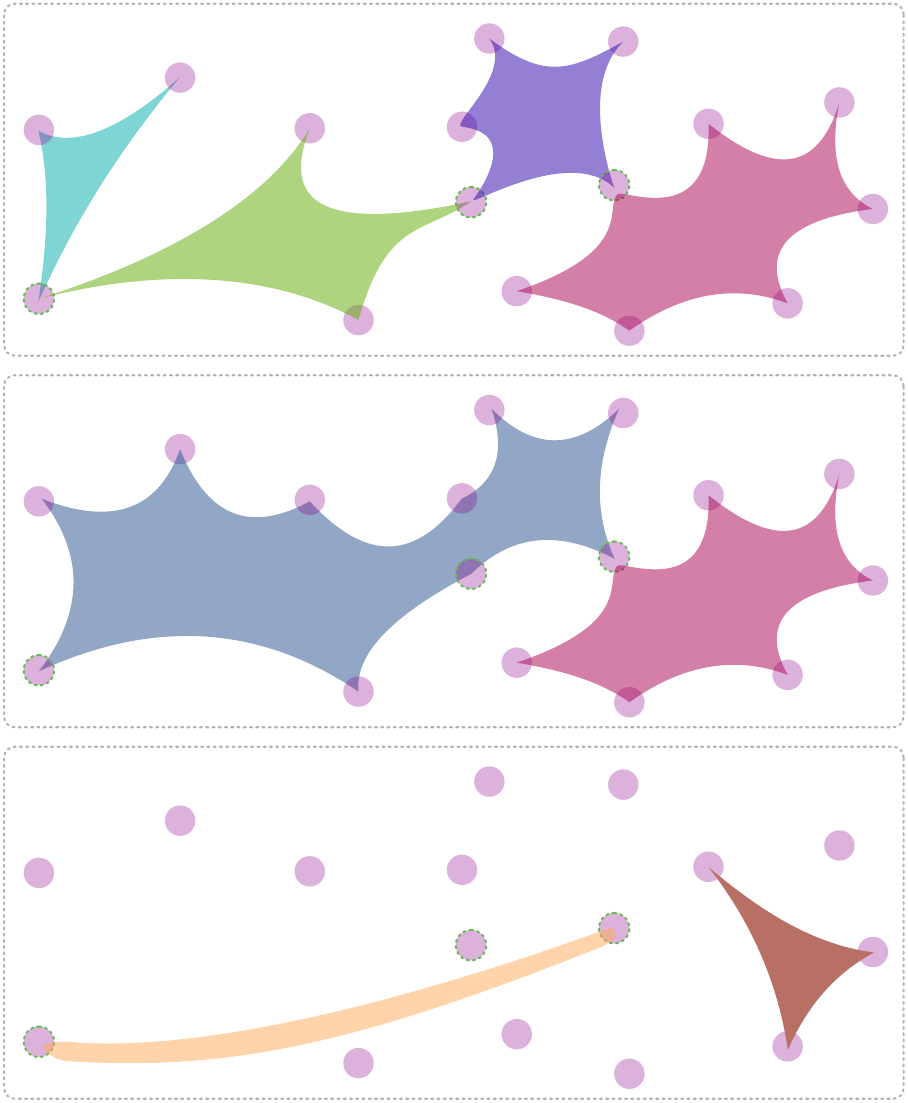}\subcaption{\label{fig_coa_eva}}}
		\end{minipage}
		\vspace{-3mm}
		\caption{\label{fig_operation}\textbf{Hypergraph immersion.} 
			(a) The ``coalescence'' operation merges two  hyperedges that share at least one vertex, resulting in a new hyperedge that joins all vertices from the original two. (b) The ``dewetting'' operation detaches a vertex from a hyperedge, reducing the hyperedge's size by one. (c) 
			The desired final QN topology can be reached from the initial topology through a sequence of operations, provided that the final topology is immersed within the original structure.\vspace{-2mm}
			\hfill\hfill
		}
	\end{figure}

	\emph{Hypergraph QN.---}We start by considering a hypergraph QN model where each hyperedge $e$ of size $r$ (i.e.,~$e$ joins $r$ vertices together) is represented as a generalized GHZ state between $r$ qubits:
	\begin{eqnarray}
		\left|\text{GHZ}\right\rangle_r = \frac{|\stackrel{r}{\overbrace{00\dots0}}\rangle + |\stackrel{r}{\overbrace{11\dots1}}\rangle}{\sqrt{2}}.
	\end{eqnarray}
	These states, albeit varying in size ($r\ge 2$), always have the lowest nontrivial \emph{generalized Schmidt rank}~\cite{nielsen2010quantum, *eisert2001schmidt} to exhibit entanglement  ($\text{rank}=2$). 
	GHZ states are prevalent in quantum systems, such as the random transverse-field quantum Ising model, where the critical ground state consists exclusively of GHZ-state magnetic domains~\cite{sdrg_ck23}. The Ising ground state naturally facilitates a QN interpretation using complex network theory~\cite{QEP-GHZ_pclla10}.
	
	How do we entangle multiple vertices by \emph{routing} via several hyperedges? 
	Given, for example, two GHZ states of $r_1$ and $r_2$ qubits, as well as at least one common vertex that shares two qubits separately from the two states,
	a simple projector on the two qubits,
	\begin{eqnarray}
		M \propto \ket{0}\!\bra{00} + \ket{1}\!\bra{11}, 
		\label{eq:measurement_coa}
	\end{eqnarray}
	suffices to ``merge'' the two states into a larger GHZ state of $r_1+r_2-1$ qubits, which are all simultaneously entangled.
	Conversely, one has the freedom to {disentangle} a vertex from a GHZ state of size $r$ by applying
	\begin{eqnarray}
		M'  \propto \ket{0}\!\bra{0} + \ket{0}\!\bra{1} 
		\label{eq:measurement_eva}
	\end{eqnarray}
	on the vertex, yielding a smaller GHZ state with $r-1$ qubits.
	With appropriate rotations of $\ket{0}$ and $\ket{1}$,
	these two operations are also applicable to general {Schmidt-rank-2} states of the form $| \alpha_1 \otimes \alpha_2 \otimes ... \otimes \alpha_n \rangle + z |\beta_l \otimes \beta_2 \otimes ... \otimes \beta_n \rangle$~\cite{rank-2_tgp10}. An example of such is the cat state, commonly used in continuous-variable optical communications~\cite{deleglise2008reconstruction}.  
	
	The projectors $M$ and $M'$ do {not} represent optimal routing protocols. However,
	a combination of $M$ and $M'$ is sufficient to reproduce, with a consistently nonzero success probability, any \emph{topological} change to QN through entanglement routing tasks. Therefore, entanglement routing, or simply the inclusion of $M$ and $M'$, creates a {partial order} between different QN topologies made up of general Schmidt-rank-2 states. This is akin to, but inherently different from how stochastic local operations and classical communication (SLOCC) sets an \emph{algebraic} partial order for generic multipartite entanglement~\cite{szalay2015multipartite}.
	
	\emph{Coalescence and dewetting operations.---}The topological emphasis of  Eqs.~\eqref{eq:measurement_coa}~and~\eqref{eq:measurement_eva} suggests that we can simplify them as hypergraph operations. Consider two hypergraphs $H$ and $G$, each comprising sets of vertices $V(H)$ and $V(G)$, and sets of hyperedges $E(H)$ and $E(G)$, respectively. Both $H$ and $G$ are loopless~\footnote{We note that a loop in (hyper)graphs means an edge contains the same vertex more than once. The case where an edge only includes one vertex is allowed.} and undirected, but may include multi-hyperedges. We claim that the feasibility of deriving $H$ from $G$ through entanglement routing is equivalent to asserting that $H$ can be derived from $G$ through the following hypergraph operations:
	\begin{itemize}
		\item \emph{Coalescence}: merging two connected hyperedges to create a new one including all the incident vertices.
		\item \emph{Dewetting}: detaching a hyperedge from one vertex that it is incident on.
	\end{itemize}
	
	One can visualize vertices in the hypergraph as spatially distributed objects (pink) that are ``hydrophilic'' (Fig.~\ref{fig_operation}).
	Correspondingly, each hyperedge $e$ can be visualized as a water droplet, touching those hydrophilic objects (vertices) that are incident with $e$. In this physical analogy, the coalescence operation results in the merging of two droplets that share contact with a common vertex $v$, forming a larger droplet [Fig.~\ref{fig_coalesce}]; the dewetting operation simulates the reduction in the volume of a droplet and the loss of its connections to the vertices it was previously in contact with [Fig.~\ref{fig_dewet}]. This physical analogy motivates our naming of the two operations.
	
	Note that both operations are irreversible: the coalescence reduces the number of hyperedges by one; the dewetting reduces the number of vertices incident with the hyperedge by one. 
	Hence, for $H$ to be derived from $G$ through these operations, a necessary condition is $\left|V(H)\right|\le \left|V(G)\right|$ and $\left|E(H)\right|\le \left|E(G)\right|$, i.e.,~$H$ must not possess more vertices or hyperedges than $G$. In general, determining whether $H$ can be derived from $G$ poses a nontrivial problem. In the following, we present an alternative definition in parallel to these operations.
	
	\emph{Hypergraph immersion.---}We  consider a general function, $\alpha$, with domain $V(H)\cup E(H)$, such that: 
	\begin{enumerate}
		\item $\alpha(v)\in V(G)$ for all $v\in V(H)$, and $\alpha(v_1)\neq \alpha(v_2)$ for all distinct $v_1,v_2\in V(H)$; 
		\item for each hyperedge $e\in E(H)$, if $e$ has distinct ends $v_1,v_2,\cdots$, then $\alpha(e)$ is a connected subgraph in $G$ that includes $\alpha(v_1),\alpha(v_2),\cdots$. 
		\item for all distinct $e_1,e_2\in E(H)$, $E(\alpha(e_1)\cap \alpha(e_2))=\emptyset$;
	\end{enumerate} 
	In other words, $\alpha$ is an injective mapping from vertices in $H$ to vertices in $G$, and from hyperedges in $H$ to \emph{edge-disjoint} connected subgraphs in $G$.
	The formulation of $\alpha$ mirrors the \emph{graph immersion} as outlined in Ref.~\cite{RobertsonSeymour_minor_2010}, which has a similar definition that maps edges in $H$ to edge-disjoint paths in $G$, but with $H$ and $G$ being ordinary graphs. For ordinary edges, the function $\alpha$ reduces to the definition of graph immersion. More (hyper)graph theory terminologies can be found in SM, Section~1.
	\begin{theorem}
		An immersion $\alpha$ of hypergraph $H$ in hypergraph $G$ exists if and only if $H$ can be obtained from a subgraph of $G$ by a sequence of coalescence and dewetting operations. 
	\end{theorem}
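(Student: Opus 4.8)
The statement is an ``if and only if,'' so the plan is to treat the two directions separately; the reverse direction (operations $\Rightarrow$ immersion) is routine, and the forward direction (immersion $\Rightarrow$ operations) carries the content.

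\textbf{Reverse direction.} I would argue by induction on the number of operations applied to the starting subgraph. For the base case, if $H$ is itself a subgraph of $G$ then the identity map is an immersion: each single hyperedge, viewed as the one-hyperedge subgraph on its own ends, is connected and contains those ends, and distinct hyperedges of $G$ are trivially edge-disjoint. For the inductive step, suppose $H'$ is immersed in $G$ via $\alpha'$ and $H$ is obtained from $H'$ by one operation. If dewetting detaches a hyperedge $e$ from a vertex $v$, keep $\alpha=\alpha'$: $\alpha(e)=\alpha'(e)$ is still a connected subgraph containing the images of the remaining ends of $e$ (carrying extra vertices is harmless), and the injectivity and edge-disjointness conditions are untouched. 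If coalescence merges hyperedges $e_1,e_2$ sharing a vertex $v$ into a hyperedge $e$, keep $\alpha=\alpha'$ everywhere except set $\alpha(e):=\alpha'(e_1)\cup\alpha'(e_2)$; since $\alpha'(v)\in\alpha'(e_1)\cap\alpha'(e_2)$ this union is connected, it contains the images of all ends of $e$, and it remains edge-disjoint from every other $\alpha'(f)$ because $\alpha'(e_1),\alpha'(e_2),\alpha'(f)$ were pairwise edge-disjoint; looplessness survives because $e_1\cup e_2$ is a set of vertices. Hence $H$ is immersed in $G$.

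\textbf{Forward direction.} Given an immersion $\alpha$ of $H$ in $G$, let $G_0$ be the subgraph of $G$ whose hyperedge set is $\bigcup_{e\in E(H)}E(\alpha(e))$ and whose vertex set consists of the vertices these hyperedges touch together with the images $\alpha(v)$, $v\in V(H)$; by the edge-disjointness condition this hyperedge set is the disjoint union of the $E(\alpha(e))$. The combinatorial heart of the proof is the lemma: \emph{a connected hypergraph with at least one hyperedge can be reduced, by a sequence of coalescences, to a single hyperedge incident to all of its vertices.} I would prove this by induction on the number of hyperedges, using the sub-fact that a connected hypergraph with two or more hyperedges has two of them sharing a vertex --- trace a walk between vertices of two distinct hyperedges; if no two distinct hyperedges met, every hyperedge on that walk would coincide with the first, forcing the walk's endpoints into a single hyperedge, so the two chosen hyperedges intersect, a contradiction. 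Coalescing such a pair strictly lowers the hyperedge count and preserves both connectedness and the vertex set, so the induction goes through. Applying the lemma inside each $\alpha(e)$ --- legitimate since the $E(\alpha(e))$ are pairwise disjoint, so the coalescences for different $e$ never interfere --- turns $G_0$ into a hypergraph with exactly one merged hyperedge $\hat e$ per $e\in E(H)$, incident to $V(\alpha(e))\supseteq\{\alpha(v): v \text{ an end of }e\}$. Finally, for each $e$ use dewetting to detach $\hat e$ from every vertex of $V(\alpha(e))$ that is not the image of an end of $e$; adopting the convention that a vertex incident to no hyperedge drops out (such a vertex is a disentangled qubit, irrelevant to the network topology), the outcome has vertex set $\alpha(V(H))$ with hyperedges in incidence-preserving bijection with $E(H)$, i.e.\ it is isomorphic to $H$ via $\alpha^{-1}$.

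\textbf{Main obstacle.} All the subtlety is in the forward direction and concerns degenerate bookkeeping rather than the main line of argument: hyperedges of size one (whose image may be a lone vertex of $G$ carrying no hyperedge) and the leftover ``internal'' vertices of the subgraphs $\alpha(e)$, which genuinely cannot be avoided in general --- $K_3$ immerses in a $6$-cycle only through internal vertices, yet $K_3$ on three vertices is unreachable from any subgraph of the $6$-cycle by these operations --- so the statement must be read with isolated vertices discarded. I expect the cleanest fix is to make explicit the convention, already implicit in regarding a hypergraph as the family of its hyperedges, that isolated vertices are not part of a hypergraph; this simultaneously licenses discarding them, forces hyperedges to have size at least two (removing the size-one annoyance), and must then be verified to be respected at every step above. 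With that in place one also checks that the ``subgraph of $G$'' in the theorem may be taken to be exactly the immersion model $G_0$, so that nothing outside $\alpha$ is ever used.
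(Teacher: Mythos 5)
Your proof is correct and follows essentially the same route as the paper's: for the forward direction, coalesce each edge-disjoint connected subgraph $\alpha(e)$ down to a single hyperedge and then dewet the superfluous internal vertices; for the converse, trace each hyperedge of $H$ back to the set of hyperedges of $G$ consumed in producing it. You additionally spell out two points the paper leaves implicit --- the inductive verification of the converse, the lemma that connectivity guarantees a coalescable pair, and the convention that vertices left isolated after dewetting are discarded --- which tightens rather than changes the argument.
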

	\begin{proof}
		If an immersion $\alpha$ exists, we can show that $H$ is isomorphic to a hypergraph $G^{\text{cl;dw}}$ obtained by applying the coalescence and dewetting operations on a subgraph of $G$, where the bijection corresponding to the isomorphism $f: V(H)\to V(H')$ takes the same value as $\alpha$, i.e.,~$\forall v\in V(H)$, $f(v) = \alpha(v)$. 
		
		We build $G^{\text{cl;dw}}$ as follows. 
		For each $e\in E(H)$ with distinct ends $v_1, v_2, \dots, v_{\left|e\right|}$, we apply the coalescence operation to every adjacent pair of hyperedges in the corresponding connected component $\alpha(e)\subseteq G$, until we obtain one single hyperedge incident to all vertices in $\{\alpha(v_1), \alpha(v_2), \dots, \alpha(v_{\left|e\right|})\}$. The coalescence operation is possible since $\alpha(e)$ is connected.
		Then, we apply the dewetting operation to detach the single hyperedge from vertex $u$, $\forall u\in V(\alpha(e))\backslash \{\alpha(v_1), \alpha(v_2), \dots, \alpha(v_{\left|e\right|})\}$. Doing this for each edge-disjoint $\alpha(e)$, $\forall e\in E(H)$, we denote the resulted hypergraph, a collection of single hyperedges derived from every $\alpha(e)$, by $G^{\text{cl;dw}}$.
		It is straightforward to show that $H$ is isomorphic to $G^{\text{cl;dw}}$ with the bijection $f$. 
		
		Now, if $H$ can be obtained by applying the two operations to a subgraph of $G$, we can immediately construct $\alpha$ which maps $v\in V(H)$ to $\alpha(v)\in V(G)$, the one it comes from. Then $a(v_1)\ne \alpha(v_2)$ for all distinct $v_1,v_2\in V(H)$. Also, for each edge $e\in E(H)$, we can retrieve the set of edges $\alpha(e)\in E(G)$ where we apply the operations to obtain $e$. Then $\alpha(e)$ is connected since the two operations can only be applied to connected edges. Also, for all distinct $e_1, e_2\in E(H)$, $E(\alpha(e_1)\cap\alpha(e_2)) = \emptyset$, since each edge in $G$ can be used at most once in the operations. Hence, $H$ is immersed in $G$ by definition. 
	\end{proof}
	
	Intuitively, the three conditions in the definition of hypergraph immersion can be understood in parallel with concepts in quantum information:
	
	The first condition, concerning vertices---or \emph{locality}, requires that
	two vertices in $H$ are distinct if and only if they correspond to different vertices in $G$. In other words, $H$ cannot introduce or merge vertices. This constraint underscores the fact that only qubits associated with different QN vertices are restricted by (S)LOCC---the free operation that establishes entanglement as a resource~\cite{q-resour_cg19}. 
	This {locality} constraint in QN must be upheld throughout entanglement routing.
	
	The second condition, concerning hyperedges---or \emph{connectivity}, requires that to establish entanglement between distinct vertices, the vertices must be ``connected.'' 
	This emphasis loosely resonates with the Reeh--Schlieder theorem~\cite{reeh-schlieder_rs61}, suggesting that localized regions in quantum field theory (QFT) must also be  ``connected''~\cite{q-field-theor-entangle_w18}.  Indeed, we can view a QN as a partition of spatially distributed qubits (quantum fields), such that qubits are affiliated with the same vertex in QN if and only if they are encompassed within the same local region.
	This view translates QN to the language of QFT~\cite{q-field-theor-entangle_cfhrw98}, emphasizing the topological aspects of entanglement overall.
	
	The third condition requires that 
	each hyperedge in $G$ can be utilized only once to derive a hyperedge in $H$. This rule is a direct reflection of the \emph{no-cloning theorem}~\cite{q-no-clone_p70,*q-no-clone_wz82}, asserting that a quantum state cannot be replicated. In other words, once a hyperedge is utilized for entanglement routing, it cannot be reused.
	
	\begin{table}[t]
		\centering
		\fontsize{9}{10}\selectfont
		\caption{\label{table_k}
			\textbf{Conditions for immersion of a complete $r$-uniform hypergraph $K_n^r$ in $G$.\hfill\hfill} 
		}
		\begin{tabular}{p{22pt}|c|p{200pt}}
			\hline\hline
			\multicolumn{2}{c|}{$H=K_n^r$ }
			& $H$ can be immersed in $G$ if and only if...\\
			\hline
			\includegraphics[width=22pt]{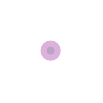}&$K_1^1$&$\left|V(G)
			\right|\ge1$.\\
			\hline
			\includegraphics[width=22pt]{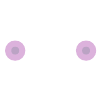}&$K_2^1$&$\left|V(G)
			\right|\ge2$.\\
			\hline
			\includegraphics[width=22pt]{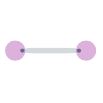}&$K_2^2$&$\left|E(G)
			\right|\ge 1$.\\
			\hline
			\includegraphics[width=22pt]{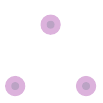}&$K_3^1$&$\left|V(G)
			\right|\ge3$.\\
			\hline
			\includegraphics[width=22pt]{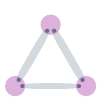}&$K_3^2$&\vspace{-6mm}$\exists$ a Berge cycle~\cite{berge1973hyper} of length $\ge3$, or two length-$2$ Berge cycles that are edge-disjoint and share exactly one common vertex.
			\\
			\hline
			\includegraphics[width=22pt]{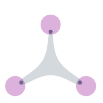}&$K_3^3$&$\exists$ a connected subgraph of size $\ge3$ in $G$.\\
			\hline
			\includegraphics[width=22pt]{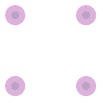}&$K_4^1$&$\left|V(G)
			\right|\ge4$.\\
			\hline
			\includegraphics[width=22pt]{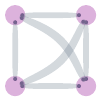}&$K_4^2$&\vspace{-6mm}{$\exists$ an (ordinary) graph, derived through dewetting only on hyperedges in $G$, that is either non-series-parallel~\cite{Duffin_seriesparallel_1965} or falls into the special cases given  in Ref.~\cite{Booth_K4_1999} (SM, Section~2.1).}\\
			\hline
			\includegraphics[width=22pt]{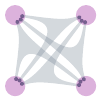}&$K_4^3$&\vspace{-6mm}{$\exists$ a ``restricted'' graph immersion of one of $18$ ordinary graph variants of $K_4^3$ (SM, Section~2.2).}\\
			\hline
			\includegraphics[width=22pt]{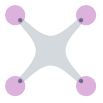}&$K_4^4$&$\exists$ a connected subgraph of size $\ge4$ in $G$.\\
			\hline
			$\vdots$&$\vdots$&$\vdots$\\
			\hline\hline
		\end{tabular}
	\end{table}
	
	\begin{figure*}[t!]
		\centering
		\begin{minipage}[b]{121pt}
			\centering
			{\includegraphics[width=121pt]{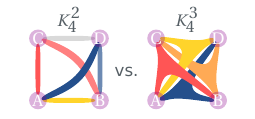}\vspace{-3mm}\subcaption{\label{fig_k4}}}
		\end{minipage}
		\hspace{3mm}
		\begin{minipage}[b]{86pt}
			\centering
			{\includegraphics[width=86pt]{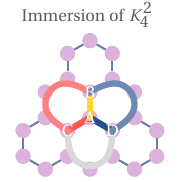}\subcaption{\label{fig_hex_k42}}}
		\end{minipage}
		\begin{minipage}[b]{86pt}
			\centering
			{\includegraphics[width=86pt]{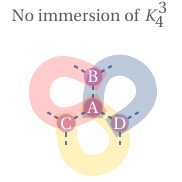}\subcaption{\label{fig_hex_k43}}}
		\end{minipage}    
		\hspace{3mm}
		\begin{minipage}[b]{86pt}
			\centering
			{\includegraphics[width=86pt]{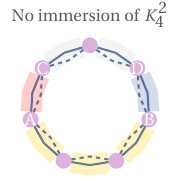}\subcaption{\label{fig_cycle_k42}}}
		\end{minipage}
		\begin{minipage}[b]{86pt}
			\centering
			{\includegraphics[width=86pt]{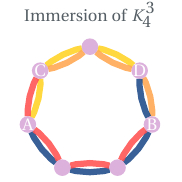}\subcaption{\label{fig_cycle_k43}}}
		\end{minipage}
		
		\caption{\textbf{$K_4^2$ versus~$K_4^3$}. \subref{fig_k4}~Both $K_4^2$ and $K_4^3$ are three-edge-connected, i.e.,~every pair of vertices (e.g.,~A and B) are connected by three edge-disjoint paths (red/pink, blue/navy, yellow/orange). But $K_4^2$ and $K_4^3$ cannot be immersed into each other.
			\subref{fig_hex_k42}~In an infinite honeycomb lattice, there exists an immersion of $K_4^2$, \subref{fig_hex_k43}~but not $K_4^3$. \subref{fig_cycle_k42}~Conversely, in a cycle graph made up of $\ge 4$ vertices and double-edges, 
			there is no immersion of $K_4^2$ \subref{fig_cycle_k43}~but $K_4^3$.
			\hfill\hfill
		}
		\vspace{-1mm}
	\end{figure*}
	
	\emph{Examples.---}Let the resulted multipartite QN after entanglement routing be a complete $r$-uniform hypergraph, $H=K_{n}^r$, 
	where $n\equiv\left|V(K_{n}^r)\right|$, and the edge set $E(K_{n}^r)$ is equal to the set of all size-$r$ subsets in $V(K_{n}^r)$.  In other words, $K_{n}^r$ includes all possible hyperedges of size $r$. 
	We present the first few necessary and sufficient conditions for $H=K_n^r$ to be immersed in $G$ in Table~\ref{table_k}. We observe that the conditions are straightforward for most small $n$ and $r$, except for $K_4^2$ and $K_4^3$, which represent the first nontrivial examples. 
	Indeed, while both $K_4^2$ and $K_4^3$ are {three-edge-connected} (SM, Section~1), 
	it can be shown that neither one can be immersed in the other [Fig.~\ref{fig_k4}]. This is an explicit example showing that immersion is a partial order, not a total order. 
	
	We discuss the criteria for immersion of $K_4^2$ and $K_4^3$ further in the SM. 
	These criteria lead to more compelling examples, involving infinite network topologies:

	\begin{theorem}
		There is an immersion of $K_{4}^2$ but not $K_{4}^3$ in an infinite honeycomb lattice ($6^3$).
		\label{the:honeycomb-1}
	\end{theorem}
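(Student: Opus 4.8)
The plan is to derive both halves from the immersion criteria in Table~\ref{table_k}, using that the honeycomb is an ordinary graph (so the hyperedge-manipulation steps there are vacuous) and is $3$-regular. For the $K_4^2$ half, the corresponding row of Table~\ref{table_k} shows it is enough to exhibit a subdivision of $K_4$ (certifying that the honeycomb is not series-parallel); I would give one explicitly. Take a lattice site $z$ with incident edges $za$, $zb$, $zc$, and let $H_{ab}$, $H_{bc}$, $H_{ca}$ be the three hexagonal faces at $z$, labelled so that $H_{xy}$ contains the edges $zx$ and $zy$. Each $H_{xy}$ contributes a length-$4$ arc from $x$ to $y$ that avoids $z$, and since two hexagons sharing an edge share no further vertex, these three arcs are internally disjoint and avoid $z$; together with $za$, $zb$, $zc$ they form a subdivision of $K_4$ on branch vertices $z,a,b,c$, and a subdivision is a fortiori an immersion.

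For the $K_4^3$ half I would invoke the $K_4^3$ row of Table~\ref{table_k} together with SM, Section~2.2, by which $K_4^3$ immerses in $G$ iff one of the $18$ ordinary-graph variants of $K_4^3$ admits a restricted immersion in $G$; the key observation is that $3$-regularity collapses this list to a single variant. In any immersion, each image vertex $B_i$ lies on exactly three of the four hyperedge-subgraphs; each such subgraph is connected and contains $B_i$ together with two further vertices, hence uses an edge at $B_i$, so---being pairwise edge-disjoint with $\deg B_i = 3$---the three subgraphs use all three edges at $B_i$, each with $B_i$ as a leaf, while the fourth subgraph avoids $B_i$. Replacing each hyperedge-subgraph by a minimal Steiner tree and noting that such a tree all of whose three terminals are leaves must be a tripod (a path would leave a terminal in its interior), one finds that the four tripod centres are pairwise distinct and distinct from the $B_i$, and that the union $U$ of the twelve legs is a subdivision, inside the honeycomb, of $K_{4,4}$ minus a perfect matching, i.e.\ of the cube $Q_3$ (each $B_i$ joined to all tripod centres except the one of the hyperedge that omits it). Thus the claim reduces to: the infinite honeycomb contains no subdivision of $Q_3$ whose branch vertices are lattice sites.

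The main obstacle is this no-subdivision statement. I would attack it as a planarity-and-girth count. Such a $U$ is a finite, $2$-connected (because $Q_3$ is $3$-connected) planar subgraph whose cyclomatic number is that of $Q_3$, namely $5$, so it has exactly five bounded faces, which by $3$-connectivity of $Q_3$ are its six quadrilateral faces, each bounded by a cycle of $U$ concatenating four legs. A bounded face encloses a simply-connected patch of $k\ge1$ honeycomb hexagons with $v_{\mathrm{int}}$ interior sites and so has length $b = 4k - 2v_{\mathrm{int}} + 2$; the girth bound $b\ge 6$ then forces $v_{\mathrm{int}}\le 2(k-1)$, and the outer face also has length $\ge 6$. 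Feeding these six face-length relations into $\sum_{\text{faces}} b = 2|E(U)|$, together with the parity constraints coming from bipartiteness of the honeycomb (each leg's length parity is set by the two-colouring of its endpoints and must cancel around each quadrilateral face) and the fact that every lattice site used by $U$ spends all three of its edges inside $U$, I expect a contradiction to emerge on the enclosed-hexagon totals. If this bookkeeping proves unwieldy, the fallback is to apply the explicit necessary-and-sufficient condition for $K_4^3$ derived in SM, Section~2.2 and verify, by the same $3$-regularity reduction, that the honeycomb fails it.
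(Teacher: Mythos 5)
Your $K_4^2$ half is correct and follows the paper's route: both arguments reduce to the $K_4^2$ row of Table~\ref{table_k} (non-series-parallel implies an immersed $K_4$), and your explicit $K_4$-subdivision built from a site $z$, its three neighbours, and the three incident hexagons is a valid certificate of non-series-parallelness that the paper only gestures at with a figure.

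The $K_4^3$ half contains a genuine gap, and it is not one you can patch. Your reduction itself is sound: in a $3$-regular host, edge-disjointness forces each of the three hyperedge images through a terminal $B_i$ to use exactly one of its three edges, so each image prunes to a tripod, the four centres are distinct, and the union of the twelve legs is a subdivision of $Q_3\cong K_{4,4}$ minus a perfect matching. But the statement you then need---that the honeycomb contains no such $Q_3$ subdivision---is only conjectured (``I expect a contradiction to emerge''), and it is in fact false. Let $h_0$ be a hexagon with vertices $v_1,\dots,v_6$, let $w_i$ be the external neighbour of $v_i$, and let $C$ be the $18$-cycle bounding the union of $h_0$ with its six edge-adjacent hexagons; then $h_0\cup C\cup\{v_1w_1,\,v_2w_2,\,v_4w_4,\,v_5w_5\}$ is a subdivision of $Q_3$ with branch vertices $v_1,v_2,v_4,v_5,w_1,w_2,w_4,w_5$, so no face-length or parity count can rule out what exists. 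Worse, because your reduction is reversible, this subdivision decomposes into four edge-disjoint tripods (centred at $v_2$, $v_5$, $w_1$, $w_4$, with terminal set $\{v_1,v_4,w_2,w_5\}$) that satisfy, as far as I can check, all three defining conditions of a hypergraph immersion of $K_4^3$. So your approach does not merely stall; pushed to its conclusion it contradicts the statement being proved. The underlying issue is visible in the paper's own proof, which asserts that each $\alpha(e)$ realizes one of the three path configurations $A$-$B$-$C$, $B$-$A$-$C$, $A$-$C$-$B$; this omits exactly the Y-shaped (Steiner-point) configuration that the definition permits and that the supplemental material's treatment of $K_4^3$ explicitly enumerates. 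Unless the intended definition excludes Steiner points in the images of hyperedges, the non-immersion claim for $K_4^3$ in the honeycomb needs to be re-examined; in any case, as written your proposal does not establish it.
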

	\begin{proof}
		The proof of the existence of an immersion $\alpha$ of $H=K_{4}^2$ in an infinite honeycomb lattice $G$ is straightforward. According to Table~\ref{table_k}, since $G$ is not series-parallel, $\alpha$ must exist. An illustration of such an immersion is presented in Fig.~\ref{fig_hex_k42}.
		
		For $H=K_{4}^3$, assume there exists an immersion $\alpha$ of $H$ in $G$, with the four vertices in $H$ mapped to $A$, $B$, $C$, and $D$ in $G$. The presence of a hyperedge of size $3$ implies connectivity among the three vertices $\{A,B,C\}$ in $G$. Essentially, there are three possible minimal configurations: $A$-$B$-$C$, $B$-$A$-$C$, or $A$-$C$-$B$, where, for instance, $A$-$B$-$C$ indicates a path from $A$ to $B$ and then from $B$ to $C$. Similarly, for vertex sets $\{A,B,D\}$ and $\{A,C,D\}$, they must be similarly connected through paths in $G$. All the paths must be edge-disjoint. However, in a honeycomb lattice, each vertex is contained in at most three edge-disjoint paths. Consequently, there are only two possible path configurations for three hyperedges, depicted in Fig.~\ref{fig_hex_k43} (another one being its mirror symmetry). Subsequently, vertices $B$, $C$, and $D$ have exhausted all available edge-disjoint paths originating from them. Therefore, there is no space in $G$ for the immersion of a fourth hyperedge connecting $\{B,C,D\}$, contradicting $\alpha$'s existence.
	\end{proof}

	\begin{theorem}
		There is an immersion of $K_{4}^3$ but not $K_{4}^2$ in a cycle graph consisting of no less than four vertices and only double-edges.
	\end{theorem}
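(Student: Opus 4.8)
The plan is to prove both halves of the statement by an explicit construction on one side and an induction on the other. Throughout, write $G=2C_n$ for the stated graph: the cycle on $n\ge 4$ vertices $v_1,\dots,v_n$ in which every edge is replaced by two parallel edges (call such a pair a \emph{digon}). For the $K_4^3$ half I would build an immersion $\alpha$ by hand. Pick four distinct vertices $A,B,C,D$ occurring in this cyclic order along $G$; they cut $G$ into four arcs $P_{AB},P_{BC},P_{CD},P_{DA}$, each a chain of (at least one) digons, and each such chain contains two edge-disjoint paths between its endpoints (take one edge of each digon for the first path, the complementary edge for the second). I then let $\alpha$ be the identity on $\{A,B,C,D\}$ and send the four hyperedges $\{A,B,C\},\{B,C,D\},\{C,D,A\},\{D,A,B\}$ respectively to (first path of $P_{AB}$)$\cup$(first path of $P_{BC}$), (second path of $P_{BC}$)$\cup$(first path of $P_{CD}$), (second path of $P_{CD}$)$\cup$(first path of $P_{DA}$), and (second path of $P_{DA}$)$\cup$(second path of $P_{AB}$). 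Each image is connected (its two pieces share an endpoint) and contains the images of the three ends of its hyperedge; each arc is used by exactly two images, which use its two disjoint paths, so the images are pairwise edge-disjoint. Verifying the three conditions of hypergraph immersion then closes this direction.

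For the other half I would prove, by induction on $n\ge 4$, that $K_4^2=K_4$ does \emph{not} immerse in $2C_n$. At each step I may assume every image $\alpha(e)$ is a simple path between its two endpoints, since shrinking a connected image to a path inside it only discards edges and hence preserves all immersion conditions. For the inductive step ($n\ge 5$), suppose $\alpha$ were an immersion into $2C_n$. At most four vertices are images, so some vertex $w$ is not; any $\alpha(e)$ that uses an edge at $w$ must pass straight through $w$, consuming exactly one edge of each of the two digons at $w$, so by edge-disjointness at most two of the six paths do this and the rest avoid $w$ entirely. I would then \emph{suppress} $w$: delete it and join its two neighbours by a fresh digon, obtaining $2C_{n-1}$ with the four images still distinct; rerouting each of the (at most two) offending paths along a distinct edge of the new digon, and leaving the other paths untouched, yields an immersion of $K_4$ in $2C_{n-1}$, contradicting the inductive hypothesis.

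The base case $n=4$ is where the genuine argument sits. Here the four images are all of $V(2C_4)$ and, after relabelling $V(K_4)$, occur in cyclic order $A,B,C,D$. Put $G'=\bigcup_{e\in E(K_4)}\alpha(e)$. For each image, say $A$, the three paths of the $K_4$-edges incident to $A$ each contribute exactly one edge at $A$ (an endpoint of a simple path has degree one in it), using three of $A$'s four incident edges; a further path through $A$ would need two more of them, which is impossible, so $\deg_{G'}(A)=3$. Hence $G'$ is $3$-regular on four vertices and so has exactly $6$ edges; since the six paths are edge-disjoint, $\sum_{e\in E(K_4)}|E(\alpha(e))|=6$, forcing every $\alpha(e)$ to be a single edge. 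But then $\alpha(\{A,C\})$ is a lone edge joining the antipodal vertices $A$ and $C$ of the $4$-cycle, and $2C_4$ has no such edge — a contradiction, which completes the induction and the theorem. The step I expect to be the main obstacle is the suppression/rerouting in the inductive step: one has to check carefully that it really produces $2C_{n-1}$ (this needs $n\ge 5$), that at most two paths thread through the removed vertex, and that the local surgery keeps each image a simple path and keeps the whole family edge-disjoint; by comparison the base-case degree count is short.
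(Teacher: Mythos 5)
Your proof is correct. The $K_4^3$ half is essentially the paper's argument made explicit: the paper exhibits the immersion by a figure (splitting the cycle into four arcs between $A,B,C,D$ and pairing up the two parallel paths of each arc), which is exactly your construction. For the $K_4^2$ half, however, you take a genuinely different and heavier route. The paper gives a direct, one-shot counting argument valid for every $n\ge 4$ at once: naming the images $A,B,C,D$ in cyclic order, the path realizing the edge $A'C'$ must pass through $B$ or $D$ (say $B$), costing two edges at $B$, while the three paths for $A'B'$, $B'C'$, $B'D'$ each cost one more edge at $B$ as an endpoint --- five edges at a vertex of degree four, contradiction. You instead run an induction on $n$, suppressing a non-image vertex (through which at most two paths can thread, each rerouted onto one edge of the fresh digon) to reduce to $2C_4$, and only there do a global degree count forcing every image to be a single edge and deriving the contradiction from the missing $A$--$C$ edge. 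Both arguments are sound; your suppression lemma is a clean, reusable piece of immersion technology, but note that the endpoint-costs-one / through-path-costs-two observation at the heart of your base case, applied at a single well-chosen vertex adjacent in cyclic order to $A$, already settles all $n\ge 4$ simultaneously and makes the induction unnecessary.
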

	\begin{proof}
		The proof of the existence of an immersion $\alpha$ of $H=K_4^3$ in a cycle graph consisting of double edges and no less than $4$ vertices ($G$) can be shown by construction. See Fig.~\ref{fig_cycle_k43} for an example in $G$ of size $7$, where the four vertices in $H$ are mapped to $A,B,C,D$ in $G$. The generalization to $G$ of an arbitrary size $n$ is straightforward. 
		
		For $H=K_4^2$, assume by contradiction that such immersion exists, denoted by $\alpha$. Assume w.l.o.g.~the four vertices $A',B',C',D'$ in $H$ are mapped to $A,B,C,D$, respectively, that are consecutive anti-clockwise in $G$. The presence of edges $A'B', A'C', A'D'$ implies three edge-disjoint paths between $A,B$, $A,C$, and $A,D$, respectively. Since node $C$ can only reach node $A$ through nodes $B$ or $D$, assume w.l.o.g.~that $B\in \alpha(A'C')$. Hence, there are two edges that are incident on node $B$ in $\alpha(A'C')$, and there is also one edge that is incident on node $B$ in $\alpha(A'B')$. Then, there is only one edge in $G$ that is incident on node $B$ but is not in $\alpha(A'B')$ or $\alpha(A'C')$. However, $\alpha(B'C')$ and $\alpha(B'D')$ imply that there are two edges that are incident on node $B$ but are not in $\alpha(A'B')$ or $\alpha(A'C')$, which leads to a contradiction.    
	\end{proof}
	
	These examples have profound implications for QN designs. For instance, we have shown that the 2D lattice in Fig.~\ref{fig_hex_k42} is impossible for simultaneous secret sharing among every three parties in any four-party group through entanglement routing, unless additional quantum channels~\cite{q-resour_kl21} or memories~\cite{q-mem_agr15,*q-netw-mem_mpnk24} are introduced to the QN to create more links. Similarly, the quasi-1D structure in  Fig.~\ref{fig_cycle_k43} also has inherent topological limitations. 
	To conclude, 
	the hypergraph immersion problem we introduce captures the topological landscape of entanglement routing in QNs, which could allow for analyzing very large scales (due to its polynomial complexity~\cite{Note1}).
	
	\emph{Discussion.---}In classical communication, information transmission is fundamentally \emph{directional}~\cite{inf-theor}, reflecting the causality between the sender and receiver. Hence, a classical broadcast can only disseminate information from one-to-many, not many-to-one. In contrast, quantum entanglement lacks a causal structure to transmit classical information, as suggested by the no-communication theorem~\cite{douglas2001noncommutative}. This absence of causality underscores the unique interpretation of multipartite entangled states as undirected multiedges, a pure quantum analog. 
	Also, it is worth noting that a variant of immersion involves \emph{labeled} vertices in both hypergraphs $G$ and $H$, corresponding to the establishment of multipartite entanglement between specified vertices. In this labeled scenario, identifying an immersion of $H$ in $G$ is usually simpler than in the unlabeled case. We leave this direction for future exploration.

	
	\bibliography{refs}

	\newpage
	\clearpage
	\appendix
	
	\section*{Supplemental Materials}
	
	\renewcommand{\appendixname}{Section}
	\renewcommand{\thesection}{\Roman{section}}
	\renewcommand{\theequation}{\Roman{section}.\arabic{equation}}
	\renewcommand{\thesubsection}{\Alph{subsection}}

	\renewcommand
	\thefigure{S\arabic{figure}}
	
	\setcounter{figure}{0}
	\setcounter{equation}{0}

	\section{Preliminaries}
	A \emph{graph} $G$ consists of vertices and edges. We denote by $V(G)$, $E(G)$ for the set of vertices and edges, respectively. We consider undirected graphs in this paper, where multiple edges may be present among the same set of vertices, but loops, i.e., an edge starting and ending at the same node, are ignored.
	In an ordinary graph, an edge can only connect two vertices, thus $\forall e\in E(G)$, $e=uv$ with $u,v\in V(G)$. 
	
	End-to-end connections between vertices are important in a graph. Specifically, a \textit{cut point} in a connected graph $G$ is a vertex whose removal disconnects $G$. Two vertices in $G$ are said to be \textit{biconnected} if they cannot be disconnected by the removal of any cut point. A {\textit{biconnected component}} of $G$ is the subgraph induced by a maximal set of pairwise biconnected vertices. 
	
	Two vertices are said to be \textit{three-edge-connected} if there are at least three edge-disjoint paths between them. A \textit{three-edge-connected component} of $G=(V,E)$ is a graph $G'=(V',E')$ where $V'\subseteq V$ is a maximal set of vertices that are pairwise three-edge-connected in $G$, and $E'$ contains all edges induced by $V'$ plus a (possibly empty) set of \textit{virtual edges} defined as follows: for $\{u,v\}\subseteq V'$, a virtual edge $\{u,v\}$ is added to $E'$ whenever there exist a pair of distinct nodes $\{x,y\}$ that are not in $V'$, such that removing edges $\{u,x\}$ and $\{v,y\}$ disconnects the graph. 
	{The three-edge-connected component better represents the connectivity of the nodes, projected onto a smaller graph.}
	Note that, due to the possible presence of virtual edges, a three-edge-connected component will not necessarily be a subgraph. 
	
	Another important definition is based on the series and parallel operations. Specifically, an edge is said to be in \textit{series-parallel connection} if the joint resistor (of the whole graph) through this edge can be evaluated by Ohm's two rules (resistors either in series or in parallel)~\cite{Duffin_seriesparallel_1965}. A graph is \textit{series-parallel} if every edge is in series-parallel connection.
	
	We consider the important notion of \emph{immersion} in this paper. 
	Specifically, for ordinary graphs, an immersion of $H$ in $G$ is a function $\alpha$ with domain $V(H)\cup E(H)$ such that~\cite{RobertsonSeymour_minor_2010}: 
	\begin{enumerate}
		\item $\alpha(v) \in V(G)$ for all $v\in V(H)$, and $\alpha(u)\ne\alpha(v)$ for all distinct $u,v\in V(H)$;
		\item for each edge $e\in E(H)$, if $e$ has distinct ends $u,v$ then $\alpha(e)$ is a path of $G$ with ends $\alpha(u), \alpha(v)$; 
		\item for all distinct $e_1,e_2\in E(H)$, $E(\alpha(e_1)\cap \alpha(e_2))=\emptyset$;
	\end{enumerate} 
	Equivalently, an immersion of $H$ in $G$ exists if and only if $H$ can be obtained from (a subgraph of) $G$ by \textit{lifting}, an operation on adjacent edges: given three vertices $v$, $u$, and $w$, where $\{v,u\}$ and $\{u,w\}$ are edges in the graph, the lifting of $v,u,w$, or equivalently of $\{v,u\}, \{u,w\}$ is the operation that deletes the two edges $\{v,u\}$ and $\{u,w\}$ and adds the edge $\{v,w\}$. 
	
	The notion of \emph{embedding} (a.k.a.~topological minor) is closely related to immersion. Specifically, an embedding of $H$ in $G$ is a function $\beta$ that is defined the same as an immersion $\alpha$, except that in rule (iii), $E(\beta(e_1)\cap \beta(e_2))=\emptyset$ is replaced by $V(\beta(e_1)\cap \beta(e_2))=\beta(V(e_1 \cap e_2))$ where $V(e_1 \cap e_2)$ denotes the common endpoints of $e_1$ and $e_2$. In other words, $\beta$ is an injective mapping from edges in $H$ to vertex-disjoint paths, instead of edge-disjoint paths as in immersion. 
	As a special example, it is well known that an embedding of a complete graph of size $4$ exists if and on if the graph is not series-parallel \cite{Duffin_seriesparallel_1965}. 
	We note that an embedding is an immersion, but \emph{not} vice versa.
	
	A \emph{hypergraph} relaxes the constraints on the number of vertices in each edge, where a hyperedge $e$ can connect an arbitrary number of vertices, and we denote the size of a hyperedge by the number of vertices in this edge. An {$r$-uniform hypergraph} is then a hypergraph where each hyperedge has size $r$, and a {complete $r$-uniform hypergraph} is the one with all possible edges of size $r$, denoted by $K_n^r$ where $n$ encodes the size of the hypergraph, i.e.,~$\left|V(K_{n}^r)\right| = n$. 
	
	The end-to-end connection in hypergraphs can be considered through the generalization of a path in an ordinary graph to hypergraphs, the \emph{Berge path}~\cite{berge1973hyper}. A \emph{Berge path} of length $t$ is an alternating sequence of distinct $t+1$ vertices and distinct $t$ hyperedges of the hypergraph $G$, $v_1, e_1, v_2, e_2, v_3,\cdots, e_t, v_{t+1}$, such that $v_i, v_{i+1}\in e_i$, for $i=1,\cdots,t$. 
	A \emph{Berge cycle} is the same as a Berge path except that the last vertex $v_{t+1}$ is replaced by $v_0$, making the sequence cyclic.
	A \emph{connected subgraph} of a hypergraph is then a set of vertices which are pairwise connected by some Berge path(s). The size of the subgraph is the size of its vertex set.
	
	\section{Criteria on Immersion of $K_4^2$ or $K_4^3$}

	This section details how to determine whether $K_4^2$ or $K_4^3$ can be immersed in a hypergraph $G$. 
	Our main strategy is to transform the hypergraph immersion problem for $G$ into a set of (simpler) immersion problems on a new set of ordinary graphs, denoted $\{G^{\text{dw}}_i\}_i$ (Fig.~\ref{fig_gdw}). Each $G^{\text{dw}}_i$ is uniquely derived from $G$, containing only size-$2$ edges. This reduction simplifies the problem, as the immersion of the edges in $G^{\text{dw}}_i$ can then be treated 
	using the lifting operations, opening up the possibility of leveraging existing techniques and theorems.
	
	While the following criteria---as we will present below---are intuitive, they do not readily generalize to hypergraphs beyond $K_4^2$ and $K_4^3$. Consequently, \emph{a general reduction of hypergraph immersion to ordinary graph immersion is not straightforward}.
	Moreover,
	the associated testing procedures may not be as efficient and can exhibit exponential time complexity w.r.t.~the size of $G$. A proof demonstrating the existence of a more efficient, polynomial-time algorithm for general hypergraph immersion can be found in \xiangyi{Ref.~\footnote{In preparation.}.}

	\begin{figure}[t!]
		\centering
		\includegraphics[width=243pt]{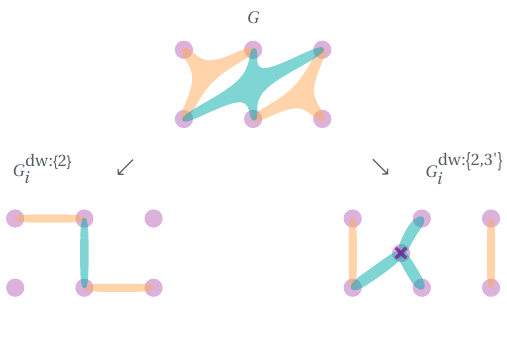}
		\caption{\textbf{Conversion of a hypergraph $G$ into sets of ordinary graphs $\{G^{\text{dw}}_i\}_i$.} Two example graphs are shown, one from the set $\{G^{\text{dw:}\{2\}}_i\}_i$ and one from $\{G^{\text{dw:}\{2,3'\}}_i\}_i$.
			\hfill\hfill}\label{fig_gdw}
	\end{figure}
	
	\subsection{$K_4^2$}
	To determine if $H=K_4^2$ can be immersed in $G$, we consider the set $\{G^{\text{dw:}\{2\}}_i\}_i$ of all possible ordinary graphs derivable from $G$ solely through dewetting operations (Fig.~\ref{fig_gdw}). Each graph $G^{\text{dw:}\{2\}}_i$ retains the original number of hyperedges (i.e.,~$\left|E(G^{\text{dw:}\{2\}}_i)\right|=\left|E(G)\right|$), but after dewetting, each hyperedge $e \in E(G)$ loses connection to all but only two incident nodes, becoming an ordinary edge. The following theorem provides a criterion for the immersion of $K_4^2$:
	
	\begin{theorem}
		\label{theorem_k42}
		The complete ($2$-uniform) graph $H=K_4^2$ can be immersed in a hypergraph $G$ if and only if there exists a subgraph of $G^{\text{dw:}\{2\}}_i$ for some $i$, such that $K_4^2$ can be constructed by lifting operations applied to the edges of the subgraph of $G^{\text{dw:}\{2\}}_i$.
	\end{theorem}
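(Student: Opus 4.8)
The plan is to prove the equivalence by reducing hypergraph immersion of $K_4^2$ to the classical case recalled in Section~I, where an immersion of $H$ in an ordinary graph $G'$ exists exactly when $H$ can be obtained from a subgraph of $G'$ by lifting. Since $K_4^2$ is itself an ordinary graph, the statement to be proved is precisely: $K_4^2$ is immersed as a hypergraph in $G$ if and only if $K_4^2$ is immersed as an ordinary graph in $G^{\text{dw:}\{2\}}_i$ for some $i$. I would argue the two directions separately; in both, one can work either from Theorem~1 or directly from the definition of hypergraph immersion, and I would use the latter, since it makes the edge-disjointness bookkeeping most transparent.

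For the forward direction, suppose $\alpha$ is a hypergraph immersion of $K_4^2$ in $G$, sending the four vertices of $K_4^2$ to distinct vertices $A,B,C,D$ of $G$. For each of the six edges $e$ of $K_4^2$, whose ends are mapped to, say, $A$ and $B$, the image $\alpha(e)$ is a connected sub-hypergraph of $G$ containing $A$ and $B$, hence contains a Berge path $P_e$ from $A$ to $B$; replacing $\alpha(e)$ by $P_e$ only shrinks the images, so the six Berge paths $\{P_e\}$ remain pairwise edge-disjoint. Now each hyperedge of $G$ lies on at most one $P_e$ and, inside that single Berge path, occurs exactly once, joining two consecutive vertices $v_j,v_{j+1}$; dewetting that hyperedge down to $\{v_j,v_{j+1}\}$ is therefore a well-defined, conflict-free choice. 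Dewetting every remaining hyperedge of $G$ to an arbitrary pair of its incident vertices (size-$1$ hyperedges can be discarded, as they cannot lie on a Berge path joining two distinct vertices) specifies one particular graph $G^{\text{dw:}\{2\}}_{i_0}$ of the family. Under this dewetting each $P_e$ becomes an ordinary path between the corresponding two of $A,B,C,D$, these six ordinary paths are edge-disjoint because the $P_e$ used disjoint hyperedge sets and the dewetting is a bijection from the hyperedges of $G$ to the edges of $G^{\text{dw:}\{2\}}_{i_0}$, and so they exhibit an immersion of $K_4^2$ in $G^{\text{dw:}\{2\}}_{i_0}$; equivalently, $K_4^2$ is obtained from a subgraph of $G^{\text{dw:}\{2\}}_{i_0}$ by lifting.

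For the converse, suppose that for some $i$ there is a subgraph of $G^{\text{dw:}\{2\}}_i$ from which $K_4^2$ arises by lifting; by the classical equivalence this means $K_4^2$ is immersed in $G^{\text{dw:}\{2\}}_i$ as an ordinary graph, so its six edges map to six pairwise edge-disjoint ordinary paths among four distinct vertices of $G^{\text{dw:}\{2\}}_i$. Since $|E(G^{\text{dw:}\{2\}}_i)|=|E(G)|$, there is a bijection between the edges of $G^{\text{dw:}\{2\}}_i$ and the hyperedges of $G$, with each ordinary edge $uv$ being the dewetted remnant of a hyperedge of $G$ still incident to $u$ and $v$. Lifting each of the six ordinary paths back to its sequence of parent hyperedges yields six Berge paths in $G$ that use pairwise disjoint hyperedge sets; together with the vertex map, they satisfy the three defining conditions of a hypergraph immersion, so they constitute a hypergraph immersion of $K_4^2$ in $G$.

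The step I expect to require the most care is the global consistency argument in the forward direction: one must be certain that the six individually pruned images $P_e$ can all be realized inside one single graph $G^{\text{dw:}\{2\}}_{i_0}$, i.e.,~that no hyperedge is forced to survive with two different pairs of endpoints. This is exactly where edge-disjointness enters — it guarantees each hyperedge lies on at most one $P_e$, pinning down its surviving pair without conflict — and it is also the reason the existential quantifier over $i$ in the statement cannot be dropped, since distinct hypergraph immersions of $K_4^2$ in $G$ will in general call for distinct dewettings.
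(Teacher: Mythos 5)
Your proof is correct and takes essentially the same route as the paper's: extract an edge-disjoint Berge path from each image $\alpha(e)$, dewet every hyperedge consistently so that all six paths live in a single $G^{\text{dw:}\{2\}}_{i_0}$, and invoke the classical equivalence between ordinary-graph immersion and lifting; the converse likewise mirrors the paper's argument of pulling the edge-disjoint paths back to edge-disjoint Berge paths in $G$. Your explicit check that edge-disjointness makes the choice of surviving endpoint pairs conflict-free is a useful detail the paper leaves implicit in its phrase ``the union of the Berge paths must be a subgraph of some member of the set.''
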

	\begin{proof}
		The sufficient condition (``if'') is straightforward. Since the lifting operation is nothing but a combination of the coalescence and dewetting operations for ordinary graphs, it implies that $H=K_4^2$ is derivable by coalescence and dewetting operations on a subgraph of $G^{\text{dw:}\{2\}}_i$, which is derivable by dewetting operations on $G$. This, by definition, constitutes an immersion of $K_4^2$ in $G$.
		
		The necessary condition (``only if'') goes as follows. Given an immersion $\alpha$ of $H$ in $G$, for each edge $e=\{u,v\}\in H= K_4^2$, 
		the corresponding connected subgraph $\alpha(e)$ in $G$ must contain a Berge path that connects $u$ and $v$, which can be derived from $\alpha(e)$ through dewetting. This Berge path, composed entirely of size-2 edges, can be lifted to form the corresponding edge $e \in K_4^2$. Since for all distinct $e_1,e_2\in E(H)$, $\alpha(e_1)$ and $\alpha(e_2)$ are edge-disjoint, the corresponding Berge paths are also edge-disjoint. Therefore, the union of the Berge paths of $\alpha(e)$ for every $e\in E(H)$ 
		must be a subgraph of some member of the set
		$\{G^{\text{dw:}\{2\}}_i\}_i$.
	\end{proof}
	
	Consequently, the hypergraph immersion problem for $G$ is reduced to a set of ordinary graph immersion problems, one for each $G^{\text{dw:}\{2\}}_i$. The immersion of $H=K_4^2$ in ordinary graphs is a well-studied problem~\cite{Booth_K4_1999}. 
	Specifically, if $G^{\text{dw:}\{2\}}_i$ is non-series-parallel (see Ref.~\cite{ValdesEtal_2TSP_1982} for an efficient algorithm), then an embedding of $K_4^2$ exists, and thus an immersion of $K_4^2$ exists; 
	Else, for every three-edge-connected components of each $G^{\text{dw:}\{2\}}_i$, run Algorithm~\ref{alg:test-G} to check whether any of them falls into the special cases where an immersion of $K_4^2$ exists (see Ref.~\cite{Booth_K4_1999} for more details on constructing three-edge-connected components). 
	
	\begin{algorithm}[H]
		\caption{test($X$).}
		\label{alg:test-G}
		\begin{algorithmic}[1] 
			\State{Input: a three-edge-connected series-parallel graph $X$.} 
			\State{Output: YES, if $X$ contains an immersed $K_4$, NO otherwise.} 
			\For{each vertex $v$ in $X$ with exactly one neighbour}
			\State delete all but three copies of edges incident on $v$
			\EndFor
			\If{any cut point in $X$ has degree $7$ or more}
			\State output YES and halt
			\EndIf
			\For{each biconnected component $B$ with four or more vertices}
			\State prune $B$ 
			\If{there is a vertex in $B$ with degree $5$ or more}
			\State output YES and halt
			\EndIf
			\EndFor
			\State output NO and halt
		\end{algorithmic}
	\end{algorithm}
	
	In particular, 
	the \emph{pruning} operation in Algorithm~\ref{alg:test-G} is defined as follows: in a three-edge-connected series-parallel multigraph, suppose $v$ is a vertex with exactly two neighbours $u$ and $w$, and suppose there is only one copy of edge $\{u,w\}$ (then at least two copies of $\{u,v\}$ by three-edge-connectivity); we say that $v$ is \emph{pruned} if the multiplicity of $\{u,v\}$ is set to $2$, and we say that a graph is pruned if each vertex fitting the profile of $v$ is pruned. 

	\begin{figure*}[t!]
		\centering
		\includegraphics[width=397.2pt]{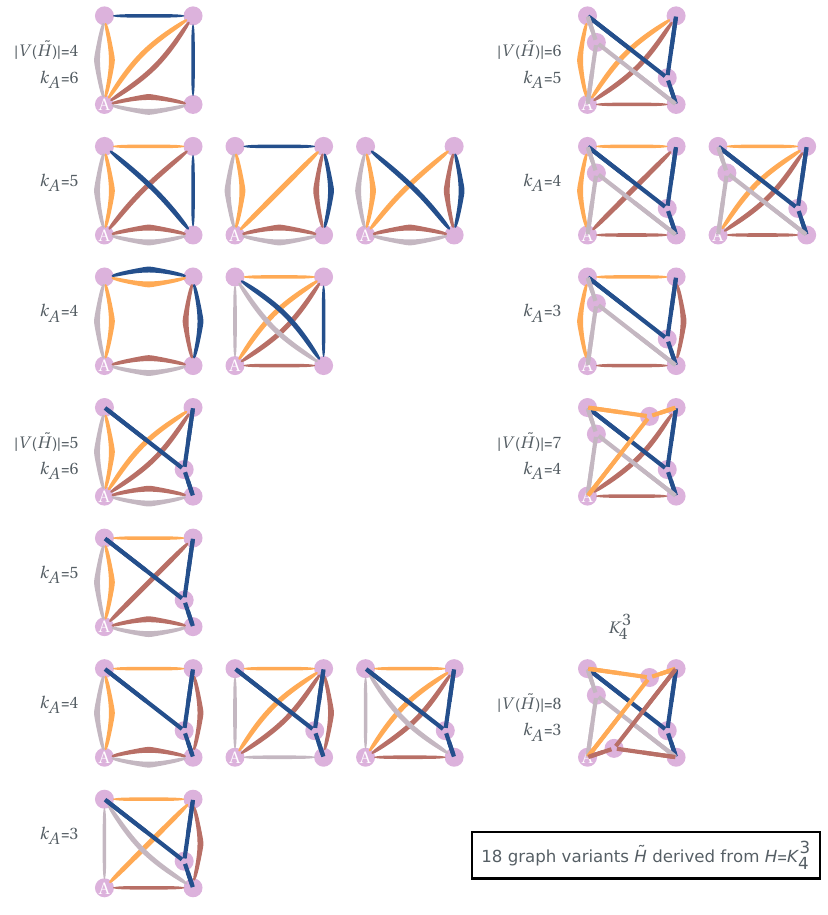}
		\caption{\textbf{Variants of $H=K_4^3$.} The set $\mathcal{F}(H)$ contains $18$ distinct graph topologies derived from $K_4^3$. Each size-$3$ hyperedge in $K_4^3$ is replaced by either two edges connecting the three incident vertices or a Y-shaped graph with an additional vertex.\hfill\hfill}\label{fig_variants_factor}
	\end{figure*}

	\subsection{$K_4^3$}
	

	Similarly, to determine if $H=K_4^3$ can be immersed in $G$, we consider the set  of all possible hypergraphs derivable from $G$ solely through dewetting operations, such that each hyperedge $e \in E(G)$ reduces to either a size-$2$ edge or a size-$3$ hyperedge. Then, we replace each size-$3$ hyperedge $e$ by its factor graph $e'$, i.e.,~a Y-shaped graph with an additional central vertex denoted by ``$\times$'' (Fig.~\ref{fig_gdw}). The set of all possible graphs derived this way is denoted by  $\{G^{\text{dw:}\{2,3'\}}_i\}_i$.
	
	
	
	In addition, we construct a set of ordinary graphs, $\mathcal{F}(H)$, derived from $H=K_4^3$ by replacing each size-$3$ hyperedge with either two ordinary edges connecting the three incident vertices or a Y-shaped graph (three edges connecting the vertices to an additional central vertex). Up to isomorphism, this yields $18$ distinct graph topologies, illustrated in Fig.~\ref{fig_variants_factor}. It is straightforward to see that for every $F\in \mathcal{F}(H)$, there exists an immersion of $H$ in $F$.
	
	The following theorem provides a criterion for the immersion of $K_4^3$:

	\begin{theorem}
		\label{theorem_k43}
		The complete $3$-uniform hypergraph $H=K_4^3$ can be immersed in a hypergraph $G$ if and only if there exists a subgraph of $G^{\text{dw:}\{2,3'\}}_i$ for some $i$, such that at least one of the $18$ graphs $F\in\mathcal{F}(H)$ (Fig.~\ref{fig_variants_factor}) can be derived from the subgraph of $G^{\text{dw:}\{2,3'\}}_i$ by lifting operations, with the restriction that no vertex in $H$ will be mapped to a vertex in $G^{\text{dw:}\{2,3'\}}_i$ that has a ``$\times$'' label.
	\end{theorem}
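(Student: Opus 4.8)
The plan is to mirror the proof of Theorem~\ref{theorem_k42}, with extra care for the artificial ``$\times$'' vertices created by the Y-factor graphs. Write $\hat{G}\equiv G^{\text{dw:}\{2,3'\}}_i$, and let $G'$ be the intermediate hypergraph obtained from $G$ by dewetting alone, so that every hyperedge of $G$ is reduced to size $2$ or $3$; then $\hat{G}$ is $G'$ with each size-$3$ hyperedge replaced by a Y centered at a fresh ``$\times$'' vertex, which has degree exactly $3$ in $\hat{G}$. Two facts are used repeatedly: hypergraph immersion is transitive (this is the partial-order property, and follows directly from the coalescence/dewetting characterization of immersion), and, for ordinary graphs, ``$F$ is derivable from a subgraph of $\hat{G}$ by liftings'' is the same as ``there is an immersion of $F$ in $\hat{G}$,'' a lifting being one coalescence followed by one dewetting. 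We also use that, by construction, each $F\in\mathcal{F}(H)$ admits an immersion of $H=K_4^3$ sending each $v\in V(H)$ to the corresponding vertex of $F$.

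For the ``if'' direction, suppose some $F\in\mathcal{F}(H)$ is derived by liftings from a subgraph of $\hat{G}$ with the four vertices of $H$ landing on non-$\times$ vertices. Composing the immersion of $H$ in $F$ with this immersion of $F$ in $\hat{G}$ yields an immersion of $H$ in $\hat{G}$ whose four images of $V(H)$ are non-$\times$ vertices. It then remains to ``fold back'' the Y's. Fix a $\times$ vertex $w$ of $\hat{G}$, coming from a size-$3$ hyperedge $\{p,q,r\}$ of $G'$; since $w$ has degree $3$, edge-disjointness forces at most one hyperedge $e\in E(H)$ to route through $w$, i.e.\ to use $\ge 2$ of its three incident edges. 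Replacing, for that $e$, the two or three used Y-edges inside the image $\alpha(e)$ by the single hyperedge $\{p,q,r\}$ of $G'$ (and discarding any lone pendant Y-edge by which another image might merely touch $w$) preserves connectedness, edge-disjointness, and the terminals; doing this at every $\times$ vertex turns the immersion of $H$ in $\hat{G}$ into an immersion of $H$ in $G'$, hence---as $G'$ is a dewetting of $G$---into an immersion of $H$ in $G$.

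For the ``only if'' direction, given an immersion $\alpha$ of $H=K_4^3$ in $G$, for each hyperedge $e\in E(H)$ dewet $\alpha(e)$ down to a minimal connected sub-hypergraph $T_e$ spanning the three terminals $\alpha(v_1^e),\alpha(v_2^e),\alpha(v_3^e)$. Such a minimal sub-hypergraph is ``tree-like'' with a single branch that is either one vertex or one hyperedge, so by further dewetting each hyperedge of $T_e$ to the incidences it actually uses we may assume $T_e$ has at most one size-$3$ hyperedge, all others of size $2$. Edge-disjointness of the $\alpha(e)$'s means each hyperedge of $G$ is owned by at most one $T_e$, so all the $T_e$ simultaneously live inside a single dewetting $G'$ of $G$ to sizes $\le 3$; passing to the corresponding $\hat{G}$ turns them into an edge-disjoint family of connected subgraphs $\hat{T}_e$ spanning their (non-$\times$) terminals. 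Each $\hat{T}_e$ is either a path through its three terminals or a ``Y'' whose branch vertex is a $\times$ vertex (if $T_e$'s branch was its size-$3$ hyperedge) or an ordinary vertex (if it was a vertex of $G'$); lifting each internal path down to a single edge turns $\hat{T}_e$ into one of the three two-edge replacements of $e$, or into the Y-replacement of $e$ with the $\times$ center mapped to that branch vertex---permitted, because the restriction constrains only the images of $V(H)$. Taking the union over the four hyperedges yields a graph isomorphic to some $F\in\mathcal{F}(H)$ (Fig.~\ref{fig_variants_factor}), derived by liftings from the subgraph $\bigcup_e\hat{T}_e$ of $\hat{G}$, with the vertices of $H$ on non-$\times$ vertices---precisely the stated criterion.

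The main obstacle is the bookkeeping around the $\times$ vertices. In the ``only if'' direction one must justify that a minimal connected sub-hypergraph spanning three terminals has only a single (vertex or hyperedge) branch, so that it collapses to exactly one of the $18$ local shapes of $\mathcal{F}(H)$; in the ``if'' direction one must verify that a degree-$3$ $\times$ vertex can be entered by at most one of the four hyperedge-images, so that the Y's can be un-replaced consistently while maintaining edge-disjointness in $G'$. Everything else is a routine composition of immersions together with the final dewetting step back to $G$.
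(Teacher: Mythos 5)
Your proposal is correct and follows essentially the same route as the paper's proof: the ``if'' direction hinges on the same degree-$3$ counting at each ``$\times$'' vertex (at most one image of a hyperedge of $H$ can pass through it, pendant touches can be discarded, and the used Y-edges fold back into the original size-$3$ hyperedge), and the ``only if'' direction classifies the image of each size-$3$ hyperedge of $K_4^3$ into the same three shapes (path through a terminal, branch at an ordinary vertex, branch at a hyperedge replaced by its Y-factor), which is exactly the paper's Berge-path trichotomy repackaged as your ``minimal connected sub-hypergraph with a single branch'' claim. The one step you flag as needing justification---that such a minimal spanning sub-hypergraph has a unique vertex-or-hyperedge branch---is asserted at the same level of detail in the paper's own case analysis, so your argument is no less complete than the original.
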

	\begin{proof}
		Similar to Theorem~\ref{theorem_k42},  the sufficient condition (``if'') is as follows. If one of the graphs $F \in \mathcal{F}(H)$ can be immersed in $G^{\text{dw:}\{2,3'\}}_i$, then by construction, $H$ can also be immersed in $G^{\text{dw:}\{2,3'\}}_i$. We must now verify that this immersion remains valid when extended to $G$. Since ``$\times$'' vertices are not actual vertices in $G$, two potential issues could invalidate the immersion: (1) a vertex of $H=K_4^3$ is mapped to a ``$\times$'' vertex, or (2) a lifting operation involving a ``$\times$'' vertex creates a topology not derivable from $G$ through coalescence and dewetting. The first issue is precluded by the theorem's conditions. The second is also impossible: at most three edge-disjoint paths can terminate at a ``$\times$'' vertex. Lifting joins two of these paths, while the third becomes isolated and irrelevant to $H$'s immersion (since ``$\times$'' vertices cannot map to vertices in $H$). This isolated path can thus be discarded. The joined two paths, however, are equivalent to dewetting the original size-$3$ hyperedge in $G$ to a size-$2$ edge, followed by lifting operations. Therefore, the immersion of $H$ in $G^{\text{dw:}\{2,3'\}}_i$ (with the ``$\times$'' vertex preclusion) implies sufficient connectivity for $H$ to be immersed in $G$.
		

		
		The necessary condition (``only if'') is as follows. 
		For each size-$3$ hyperedge $e=\{u,v,w\}\in K_4^3$, 
		the corresponding connected subgraph $\alpha(e)$ in $G$ contains three Berge paths (from $u$ to $v$, $v$ to $w$, and $w$ to $u$, respectively). There are only three possibilities for these paths: (1) Two Berge paths are edge-disjoint (e.g.,~$u$ to $v$ and $v$ to $w$ w.l.o.g.). (2) No two Berge paths are edge-disjoint, and an additional vertex $x$ exists in $\alpha(e)$ with edge-disjoint sub-paths connecting $x$ to $u$, $x$ to $v$, and $x$ to $w$, respectively. (3) No two Berge paths are edge-disjoint, and at least one hyperedge in $\alpha(e)$ is shared by all three Berge paths.
		
		In case~(1), the two Berge paths can be derived from $\alpha(e)$ by dewetting to size-$2$ edges. In case~(2), similarly, the three sub-paths can be derived by dewetting to all size-$2$ edges. In case~(3), we dewet $\alpha(e)$ to all size-$2$ edges, except for one of the shared hyperedges. This remaining hyperedge is then dewetted to a size-$3$ hyperedge connecting exactly to $u$, $v$, and $w$ via edge-disjoint sub-paths. Replacing this size-3 hyperedge with its factor graph makes this case equivalent to the second case, with the central vertex corresponding to $x$. Since in all three cases the Berge paths (or sub-paths involving $x$) are edge-disjoint, the union of them must be a subgraph of some member of the set $G^{\text{dw:}\{2,3'\}}_i$.
		
		Finally, the construction of $\mathcal{F}(H)$ enumerates all possible combinations of these cases (two-edge-series graph or Y-shaped graph) for each hyperedge, up to isomorphism. This completes our proof.
	\end{proof}
	
	Note that the preceding proof, particularly the sufficient condition, applies only when $H$ contains hyperedges of size at most $3$. Consider a size-$4$ hyperedge, $e = tuvw$, for example. Lifting the central vertex ``$\times$'' in its factor graph would create two edges, $tv$ and $uw$ w.l.o.g. However, two distinct edges cannot be derived from a single hyperedge $e$ through coalescence and dewetting, and this would invalidate the sufficient condition in Theorem~\ref{theorem_k43}. 
	This suggests that the connection between hypergraph and ordinary graph immersion is indeed more subtle, and one cannot simply reduce a hypergraph immersion problem to ordinary graph immersion for the general case.
	Therefore, alternative techniques are required.
	

	\section{More Examples on $K_4^2$ and $K_4^3$}
	In this section, we give more nontrivial examples when the hypergraph has an immersion of $K_4^2$ and/or $K_4^3$. 
	
	\begin{figure}[t!]
		\begin{minipage}[b]{121pt}
			\centering
			{\includegraphics[width=121pt]{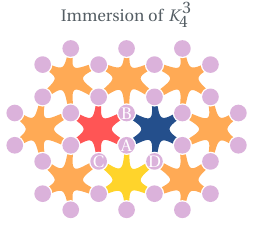}\subcaption{\label{fig_hyperhex_k43}}}
		\end{minipage}
		\begin{minipage}[b]{121pt}
			\centering
			{\includegraphics[width=121pt]{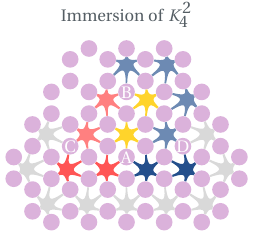}\subcaption{\label{fig_hyperhex_k42}}}
		\end{minipage}
		\centering 
		\caption{
			There exist both immersions of $K_4^2$ and $K_4^3$ in an infinite honeycomb hyper-lattice, but of different depth.
			\hfill\hfill
		}
		\label{fig_hyper}
	\end{figure}
	
	Now, if we consider the honeycomb lattice again, but instead of pairwise connections, vertices are connected by hyperedges of size $6$ (or hexagons), we can show that an immersion of $K_{4}^3$ will occur; see Fig.~\ref{fig_hyperhex_k43} for the construction. We denote such structure as infinite honeycomb hyper-lattice. Following a similar idea to the proof of Theorem~\ref{the:honeycomb-1} in the main text, we can show that an immersion of $K_4^2$ still exists; see Fig.~\ref{fig_hyperhex_k42} for the construction.
	It is clear that for every image of $K_4^2$ in an infinite honeycomb hyper-lattice, there is an immersion of $K_4^3$, but it is not necessarily true for the opposite; see Fig.~\ref{fig_hyper} for an example when three out of four images of the four vertices in $K_4^2$ are (directly) connected with the remaining one.  
	Specifically, if we restrict the honeycomb hyper-lattice to a smaller region where we cannot find four vertices such that (i) none of them are on the boundary, (ii) any two of them do not share a hexagon, and (iii) at most two of three hexagons that are incident on three of the four vertices are on the boundary, then we cannot find an immersion of $K_4^2$. 
	Hence, if we consider hypergraphs that are locally honeycomb hyper-lattice like (satisfying the above condition), and globally tree like, then an immersion of $K_4^2$ does not exist but $K_4^3$. 

	We finish this section with a slightly different type of graphs consisting of hyperedges of size $4$ that is necessarily shown in 3-dimensional space (or is not planar as before and in the main text). 
	\begin{figure}[t!]
		\centering
		\begin{tabular}{c}
			\includegraphics[width=.45\textwidth]{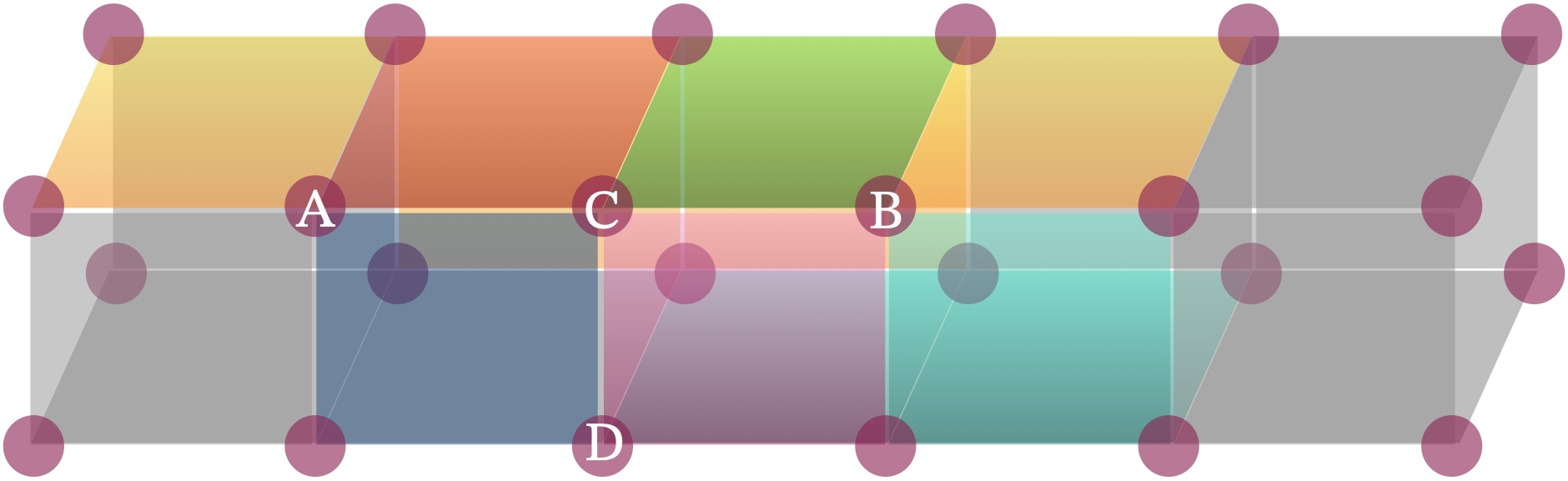} \\
			\vspace*{1em}\\
			\includegraphics[width=.45\textwidth]{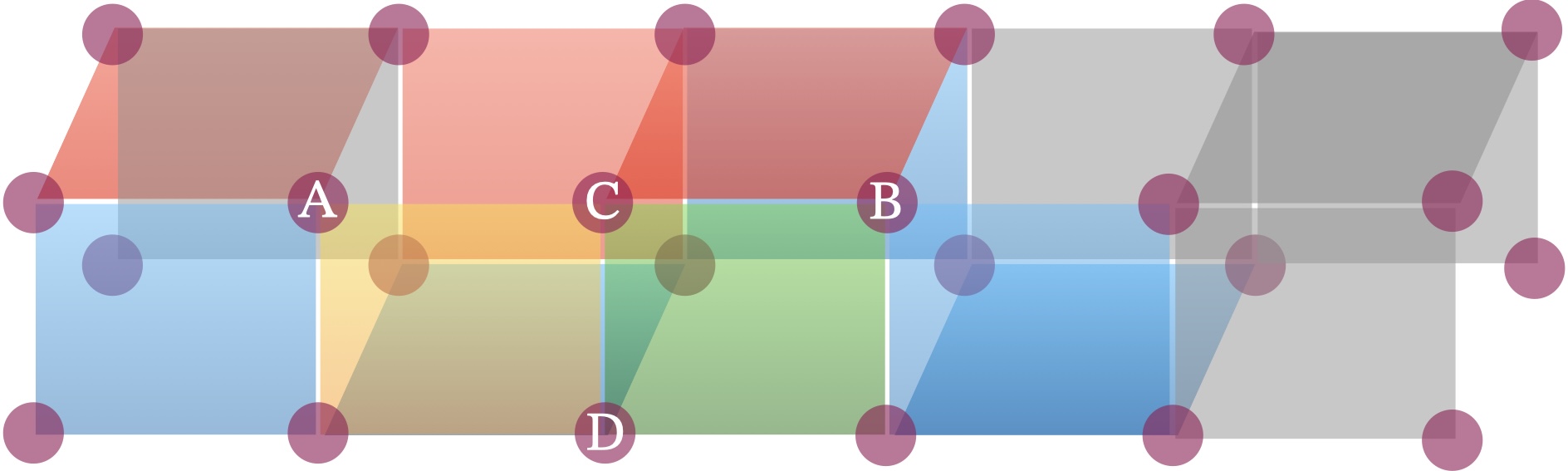}
		\end{tabular}
		\caption{\textbf{Cuboid full lattice versus cuboid alternating lattice}. Hypergraphs in the shape as a cuboid where there is no edges inside but on the four outer faces: (top) four faces are filled with hyperedges of size $4$, cuboid full lattice; (bottom) there can only be on hyperedge in the two opposite faces, cuboid alternating lattice.\hfill\hfill}
		\label{fig:K_43-sp}
	\end{figure}
	
	Specifically, we consider the hypergraph in the shape as an infinite cuboid where there is no edges inside but on the four infinite faces; see Fig.~\ref{fig:K_43-sp} for part of it. 
	We first consider the case when the four faces are filled with hyperedges (i.e.,~the top of Fig.~\ref{fig:K_43-sp}), and we denote such structure as infinite cuboid full lattice hereafter. 
	There is an immersion of $K_{4}^2$ in an infinite cuboid full lattice, since there exists an ordinary subgraph obtained with a series of dewetting operations on the hypergraph that is not series-parallel. Hence, there is an embedding of $K_4^2$, thus an immersion of $K_4^2$, in such hypergraph.
	
	However, we now consider the case when the hyperedges only occur in an alternating manner such that for the two opposite faces, there can only be one hyperedge in the two faces (i.e.,~the bottom of Fig.~\ref{fig:K_43-sp}), and we denote such structure as infinite cuboid alternating lattice hereafter. Then, we cannot find an immersion of $K_4^2$. 
		
	
	
	When it comes to $K_4^3$, it has an immersion in both hypergraphs; see the bottom of Fig.~\ref{fig:K_43-sp} for an example where vertices $A,B,C,D$ correspond to the four vertices in $H=K_4^3$. We also note that the infinite cuboid alternating lattice is a subgraph of the infinite cuboid full lattice. 

\end{document}